\newcommand{\mat}{\boldsymbol}
\newtheorem{lemma}{Lemma}
\begin{document}
\title{Energy Efficient Scheduling for Loss Tolerant IoT Applications with Uninformed Transmitter}
\author{\IEEEauthorblockN{M. Majid Butt\IEEEauthorrefmark{1},~Eduard A. Jorswieck\IEEEauthorrefmark{2},~Nicola
Marchetti\IEEEauthorrefmark{1}}
\IEEEauthorblockA{\IEEEauthorrefmark{1}CONNECT, Trinity College, University of Dublin, Ireland\\
Email: \{majid.butt,~nicola.marchetti\}@tcd.ie}
\IEEEauthorblockA{\IEEEauthorrefmark{2}Department of Electrical Engineering and Information Technology, TU
Dresden, Germany\\
Email: eduard.jorswieck@tu-dresden.de}
}
\maketitle
\begin{abstract}
In this work we investigate energy efficient packet scheduling problem for the loss tolerant applications. We
consider slow fading channel for a point to point connection with no channel state information at the
transmitter side (CSIT). In the absence of CSIT, the slow fading channel has an outage probability associated
with every transmit power. As a function of data loss tolerance parameters and peak power constraints, we
formulate an optimization problem to minimize the average transmit energy for the user equipment (UE). The
optimization problem is not convex and we use stochastic optimization technique to solve the problem. The
numerical results quantify the effect of different system parameters on average transmit power and show
significant power savings for the loss tolerant applications.
\end{abstract}
\begin{IEEEkeywords}
Energy efficiency, power control, packet scheduling, bursty packet loss, stochastic optimization.
\end{IEEEkeywords}

\section{Introduction}
Internet of things (IoT) is one of the use cases of 5G wireless communications to serve the heterogeneous
services. The applications like smart city, smart buildings and smart transportation systems depend heavily on
efficient information processing and reliable communication techniques. The use of thousands of smart and tiny
sensors to communicate regular measurements, e.g., temperature, traffic volume, etc., makes it extremely
important to look at the energy efficiency aspect of the problem. In 5G networks, context aware scheduling is
believed to play key role in smart use of resources. Depending on the application's context, it may not be
necessary to receive every packet correctly at the receiver side to avoid experiencing a serious degradation in
quality of experience (QoE). If some packets are lost, the application may tolerate the loss without requiring
retransmissions of the lost packets. The application loss tolerance can effectively be exploited to reduce
average energy consumption of the devices.

We investigate energy efficient power allocation scheme for the wireless systems with data loss constraints.
The packet loss constraints are defined in terms of average packet loss and the maximum number of packets lost
in successive time slots. The reliability aspect of the communication systems is conventionally handled at
upper layers of communication using error correction codes and/or hybrid automatic repeat request (HARQ).
Feedback based link adaptation applied in HARQ is dictated by the latency constraints of the application
\cite{Villa:2012, Choi_TVT:2013}. Our approach is different from the HARQ scheme in the sense that we assume
that we do not have a data buffer at the transmitter side due to simple nature of sensing device (node), which
makes HARQ irrelevant. Instead, we assume that the applications's QoE does not require every packet to be
received successfully, i.e., loss of some packets can be tolerated, but it must be bounded and parameterized.

In literature, some earlier works have addressed similar problems in different settings (more at network
level). In \cite{Nasralla:2014}, the authors evaluate the subjective and objective performance of video traffic
for bursty loss patterns. Reference \cite{Zou:2013} considers real-time packet forwarding over wireless
multi-hop networks with lossy and bursty links. The objective is to maximize the probability that individual
packets reach their destination before a hard delay deadline. In a similar study, the authors in
\cite{Aditya_TWC:2010} investigate a scenario where multimedia packets are considered lost if they arrive after
their associated deadlines. Lost packets degrade the perceived quality at the receiver, which is quantified in
terms of the "distortion cost" associated with each packet. The goal of the work in \cite{Aditya_TWC:2010} is
to design a scheduler which minimizes the aggregate distortion cost
over all receivers. The effect of access router buffer size on packet loss rate is studied in
\cite{Sequeira:2013} when bursty traffic is present. An analytical framework to dimension the packet loss
burstiness over generic wireless channels is considered in  \cite{Fanqqin:2013} and a new metric to
characterize the packet loss burstiness is proposed. However, these works do not characterize the effect of
average and bursty packet loss on the consumed energy at link level.

The energy aspect of the problem has been addressed in \cite{Neely2009} where the authors investigate
intentional packet dropping mechanisms for delay limited systems to minimize energy cost over fading links.
Some recent studies in \cite{majid_TWC:13, majid:sys2016} characterize the effect of packet loss burstiness on
average system energy for a multiuser wireless communication system where the transmit channel state
information (CSIT) is fully available or erroneous. This work extends the work \cite{majid_TWC:13,
majid:sys2016} such that no CSIT is assumed to be available, which poses new challenges for communication and
scheduler design. When CSIT is not available for slow fading channels, channel state dependent power control
cannot be applied and error free communication cannot be guaranteed. This results in outage which adds a new
dimension to the problem. Under different system settings, we characterize the average power consumption of the
point to point wireless network for various average and bursty packet drop parameters, as well as the outage
probability that application can tolerate loss of a full sequence of packets (successively). We model and
formulate the power minimization problem, characterize the resulting programming problem and propose a solution
based on stochastic optimization. Simulation results show that our scheduling scheme exploits packet loss
tolerance of the application to save considerable amount of energy; and thereby significantly improves the
energy efficiency of the network as compared to lossless application case.

The rest of the paper is organized as follows. The system model for the work is introduced in Section
\ref{sect:system_model} and state space description of the proposed scheme is discussed in Section \ref{sect:
state space}. We formulate the optimization problem in Section \ref{sect:optimization} and discuss the solution
in Section \ref{sect:stochastic}. We evaluate the numerical results in Section \ref{sect:results} and Section
\ref{sect:conclusions} summarizes the main results of the paper.

\section{System Model}
\label{sect:system_model}
We consider a point-to-point system such that the transmitter user equipment (UE) has a single packet to
transmit in each time slot. The packets are assumed to be with fixed size, measured in bits/s/Hz. Time is
slotted and the UE experiences quasi-static independently and identically distributed (i.i.d) block flat-fading
such that the fading channel remains constant for the duration of a block, but varies from block to block.

We assume that no transmit channel state information (CSIT) is available at the transmitter, but the
transmitter is aware of channel distribution. Depending on the scheduling state $i$ (explained later in Section
\ref{sect: state space}), the UE transmits with a fixed power $P_i$ to transmit a fixed size packet with rate
$R$ bits/s/Hz, and waits for the feedback. For convenience, the distance between the transmitter and the
receiver is assumed to be normalized.

For a transmit power $P_i$, and channel fading coefficient $h$, the outage probability for the failed
transmission (channel outage) is denoted by $\epsilon_i$ such that,
\begin{equation}\label{eqn:outage}
  \epsilon_i=\Pr\left[\log_2\left(1+\frac{P_i|h|^2}{N_0}\right)<R\right]
\end{equation}
where $N_0$ is additive white Gaussian noise power.

If the packet is received at the receiver correctly, the receiver sends back a positive acknowledgement (ACK)
message to the UE. If it is not decoded at the receiver, a negative acknowledgement (NAK) is fed-back to the
UE. The feedback is assumed to be perfect without error. Note that a power adaptation based on the feedback
results is applied even without CSIT.

Feedback based power allocation belongs to Restless Multi-armed Bandit Processes (RMBPs) \cite{Whittle:1988}
where the states of  the UE in the system stochastically evolve based on the current state and the action
taken. The UE receives a reward depending on its state and action. The next action depends on the reward
received and the resulting new state. In this work, we investigate the effect of feedback based sequential
decisions in terms of UE consumed average power.

\subsection{Problem Statement}
A single packet arrives at the transmit buffer of the UE in every time slot. The UE's data buffer has no
capacity to store more than one packet ($R$ bits/s/Hz). This is a typical scenario for a wireless sensor
network application where data measurements arrive constantly after regular fixed time intervals. The UE is
battery powered, which needs to be replaced after regular intervals. It is therefore, important to save
transmit energy as much as possible. Depending on the application, the UE has two constraints on reliability of
data packet transfer \cite{majid_TWC:13, majid:sys2016}:
\begin{enumerate}
  \item Average packet drop/loss rate $\gamma$ is the parameter that constraints the average number of
      packets dropped/lost.
  \item Maximum number of packets dropped successively. This is called bursty packet drop constraint. The
      parameter $N$ denotes the maximum number of packets allowed to be dropped successively without
      degrading QoE below a certain level. Mathematically, the distance $r(q,q-1)$ between $q^{th}$ and
      $q^{th}-1$ correctly received packets measured in terms of number of packets is constrained by
      parameter $N$, i.e.,
      \begin{equation}
        r(q,q-1)\leq N.
        \label{eqn:successive}
      \end{equation}
\end{enumerate}
Due to transmit power constraint, it is not possible to provide the guarantee in (\ref{eqn:successive}) with
probability one. Given at least $N$ packets have been lost successively by time instant $t-1$, we define a
parameter $\epsilon_{out}$ at an instant $t$ by the probability that another packet is lost, i.e.,
      \begin{equation}
        \epsilon_{out}=\Pr\Big(r_t (q,q-1) = r_{t-1}(q,q-1)+1|r_{t-1}(q,q-1)\geq N\Big)
        \label{eqn:bursty_loss}
      \end{equation}

All of these factors contribute to the QoE for the application. Average packet drop rate is commonly used to
characterize a wireless network and bounds the QoE for the application. However, the bursty packet loss in the
applications like smart monitoring sensors can degrade the performance enormously due to absence of contiguous
data measurements. At the same time, the UE can exploit the parameters $\gamma$ and $N$ to optimize average
energy consumption if the application is more loss tolerant. If the application is loss tolerant, it is
advantageous to transmit with a small power if a packet has just been received successfully in the last time
slot because the impact of packet loss due to outage is not so severe on cumulative QoE. The consideration of
bursty (successive) packet loss poses a new challenge in system modeling as the number of packets lost in
previous time slots affect the power allocation decision at time slot $t$.

Clearly, there is a trade-off between transmitting a packet at time $t$ with small power based on the success
of transmission in time slots $[t-1, t-2,\dots]$, and transmitting with large power to limit the risk of
outage. This trade-off determines the power allocation policy. Let us illustrate the impact of ACKs and NAKs on
the tightness of the constraints in the following:

If the permitted average packet loss rate $\gamma$ is very high but $N$ is small, i.e., it is not permitted to
lose more than $N$ packets successively without degrading QoE, the effective average packet drop rate becomes
much lower than the permitted $\gamma$ in this case. It may work to transmit with small power due to large
$\gamma$, but parameter $N$ does not allow it.\footnote{The effect of both parameters has been characterized in
\cite{majid_TWC:13}.} Due to successive packet drop constraint $N$, transmission of a packet in a time slot $t$
may not be as critical as in any other time slot with $t'\ne t$. If a packet was transmitted successfully in a
time slot $t-1$, it implies that transmitting a packet with a lower power is not as risky in time slot $t$.
However, when the number of successively lost packets approach $N$, power allocation needs to be increased
proportionally to avoid/minimise the event of missing $N$ packets successively, which may cause loss of
important information for wireless sensor networks.

\section{State Space Description}
\label{sect: state space}
To model the problem, we need to take the history of transmission in the last $N$ time slots into account. If a
NAK is received in time slot $t-1$, it needs to be determined whether transmission in time slot $t-2$ was an
ACK or NAK. We model the problem using a Markov chain model where the next state only depends on the current
state and is independent of the history. A Markov state $i$ is defined by the number of packets lost
successively at the transmit time $t$. If a packet was transmitted successively in time slot $t-1$, the current
state $i=0$. If two successive packets are lost in time slots $t-1$ and $t-2$, $i=2$. The maximum number of
Markov states is determined by parameter $N$.

To explain the state transition mechanism, let us examine the power allocation policy first.
At the beginning of the Markov chain process, a packet is transmitted with power $P_0$ in a time slot $t$ with
initial state $i=0$. The channel has an outage probability of $\epsilon_i$ (defined in (\ref{eqn:outage})).
If the received feedback is ACK, the process moves back to state $0$, otherwise moves to state $1$. The lost
packet is dropped permanently as UE has no buffer. In state $i=1$, the new arriving packet is transmitted with
a power $P_1>P_0$ as the packet is more important for QoE at the receiver end due to previously lost packet in
the last time slot. Thus, power allocation in state $i$ is a function of outage probability $\epsilon_i$ in
state $i$,
\begin{equation}
P_i=f(\epsilon_i)
\end{equation}
If the packet is transmitted successfully, the next state is zero, 2 otherwise. Similarly, the Markov chain
makes a transition to either state $i+1$ or state zero corresponding to the event of unsuccessful or successful
transmission, respectively.
When $i=N$ (termination state) and a packet is not transmitted successfully, this defines the outage event for
successive packet loss. This is modeled by self state transition probability $\alpha_{NN}$ of staying in Markov
state $S_N$ such that,
\begin{eqnarray}
  \alpha_{NN}=\epsilon_N
  =\Pr(S_{t+1}=N|S_t=N).
\end{eqnarray}
$P_N$ is chosen such that $\alpha_{NN}\leq \epsilon_{out}$ where $\epsilon_{out}$ is a system parameter defined
in (\ref{eqn:bursty_loss}). If a packet is lost in state $N$, we want Markov process to stay in state $N$ for
the next time slot to maximize the chances of transmission for the next packet as state $N$ has the largest
transmit power $P_N$.
\begin{lemma} For all $i \in [0,N]$ it holds
 $P_i\leq P_{i+1}$.
 \label{lem:power_level}
\end{lemma}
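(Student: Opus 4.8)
The plan is to reduce the claimed ordering of the power levels to an ordering of the outage probabilities, exploiting the monotone relationship between $P_i$ and $\epsilon_i$ that is already encoded in~(\ref{eqn:outage}).

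First I would rewrite the outage probability in closed form. Since the event $\log_2(1+P_i|h|^2/N_0)<R$ is equivalent to $|h|^2 < N_0(2^R-1)/P_i$, we have
\[
\epsilon_i = F_{|h|^2}\!\left(\frac{N_0(2^R-1)}{P_i}\right),
\]
where $F_{|h|^2}$ is the cumulative distribution function of $|h|^2$. Because $F_{|h|^2}$ is non-decreasing while the argument $N_0(2^R-1)/P_i$ is strictly decreasing in $P_i$, the map $P_i\mapsto\epsilon_i$ is strictly decreasing for any continuous fading distribution. Consequently the inverse map $f$ in $P_i=f(\epsilon_i)$ is strictly decreasing as well: a smaller target outage requires a strictly larger transmit power.

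Next I would translate the claim. By the strict monotonicity of $f$, the inequality $P_i\le P_{i+1}$ is \emph{equivalent} to $\epsilon_i\ge\epsilon_{i+1}$, so it suffices to show that the per-state outage targets are non-increasing in the state index, i.e.\ $\epsilon_0\ge\epsilon_1\ge\dots\ge\epsilon_N$. This is precisely what the scheme must enforce: state $i+1$ is entered only after one additional successive loss relative to state $i$, so the next packet is more critical for the bursty-loss constraint~(\ref{eqn:successive}); the reliability requirement therefore tightens with $i$, culminating in the termination constraint $\epsilon_N\le\epsilon_{out}$ from~(\ref{eqn:bursty_loss}). Combining this ordering with the strict monotonicity of $f$ yields $P_i=f(\epsilon_i)\le f(\epsilon_{i+1})=P_{i+1}$ for every consecutive pair of defined states $i\in[0,N]$.

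The monotonicity of $f$ is essentially a one-line consequence of~(\ref{eqn:outage}), so all of the substantive content lies in justifying $\epsilon_i\ge\epsilon_{i+1}$, which is the step I expect to be the main obstacle. If this ordering is imposed as part of the policy definition, the lemma follows immediately. If instead it must be derived, I would establish it from the successive-loss constraint~(\ref{eqn:bursty_loss}) together with the Markov transition structure of Section~\ref{sect: state space}, arguing that assigning $\epsilon_{i+1}>\epsilon_i$ would raise the probability of accumulating $N$ consecutive losses and hence violate the tolerance $\epsilon_{out}$; this monotone tightening of the outage targets with depth is exactly the mechanism that forces the power levels to be non-decreasing.
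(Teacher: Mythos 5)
Your opening reduction, while correct, does not actually advance the proof: since the map $f$ from outage target to transmit power is a strictly decreasing bijection, the statement $\epsilon_i\ge\epsilon_{i+1}$ is not a stepping stone toward $P_i\le P_{i+1}$ --- it is the \emph{same} statement in different variables. All of the content of the lemma therefore sits in the step you yourself flag as the main obstacle, and your proposed justification for that step misreads the constraint structure of the problem. In (\ref{eqn:optimization})--(\ref{eqn:constrains}) there is no per-state reliability requirement that ``tightens with $i$'': the only constraints are the average one, $\mathcal{C}_1:\sum_{i}\epsilon_i\pi_i\le\gamma$, and $\mathcal{C}_2:\epsilon_N\le\epsilon_{out}$, which binds the termination state alone. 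A feasible policy may set $\epsilon_{i+1}>\epsilon_i$ for intermediate states without violating either constraint; in particular it cannot ``violate the tolerance $\epsilon_{out}$,'' because $\epsilon_{out}$ bounds only the self-transition probability $\alpha_{NN}=\epsilon_N$ of state $N$, not the probability of accumulating $N$ consecutive losses along the way. So the ordering you need is neither imposed by the policy definition nor derivable from feasibility, and your fallback argument fails.

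What the lemma really asserts is a property of the \emph{optimal} power vector, and the paper proves it by contradiction with an exchange argument: if $P_i>P_{i+1}$, then instead of transmitting in state $i$ the UE would do better to forgo transmission there (deliberately moving to state $i+1$) and transmit at the lower power $P_{i+1}$; cascading this reasoning degenerates the chain into a birth--death process in which a transmission occurs only in state $N$, once every $N-1$ slots, which is both suboptimal and infeasible with respect to realistic values of $\gamma$ and $N$. That optimality-based exchange step is precisely what your proposal is missing; without it (or a substitute, e.g.\ a perturbation argument showing that any feasible non-monotone power assignment can be strictly improved), the lemma is not proved.
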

\begin{proof}
  It is straight forward to prove by contradiction. If $P_i>P_{i+1}$ and the UE is allowed to enter state
  $i+1$, an optimal decision is not to transmit in state $i$ at all and wait for a transmission in state $i+1$
  which requires less power. This is a birth death process where after every $N-1$ time slots, one transmission
  is made in state $N$ with power $P_N$. This clearly is suboptimal solution, and makes solving problem for
  most of the realistic $\gamma$ and $N$ values infeasible.
\end{proof}
The state transitions from state $i$ to $j$ occur with a state transition probability $\alpha_{ij}$.
The state transition probability is a function of parameters $\gamma, N$ and channel distribution. For every
transmit power $P_i$, there is an associated state transition probability $\alpha_{ij}$.

\begin{figure}
\center
\includegraphics[width=3.5in]{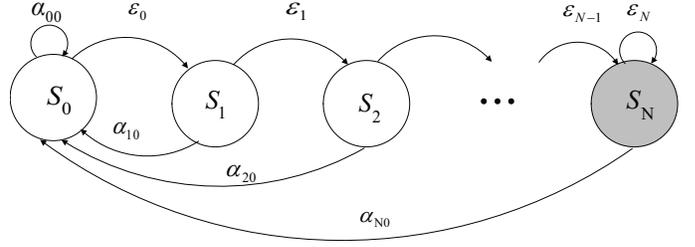}
\caption{State diagram for the Markov chain for the UE power allocation scheme.}
\label{fig:state_dia}
\end{figure}

Formally, the state transition probability $\alpha_{ij}$ from the current state $S_t=i$ to next state
$S_{t+1}=j$ is defined by,
\begin{eqnarray}
\alpha_{ij} &=& {\rm Pr}(S_{t+1}=j|S_t=i)\\
&=&\begin{cases}
              1-\epsilon_i, & \mbox{if ACK Received},\forall i, j=0  \\
              \epsilon_i, & \mbox{if NAK Received},i\ne N,j=i+1,\\&0 \leq \epsilon_i \leq 1 \\
              \epsilon_N,&\mbox {if NAK Received},i= N,j=N  \\
              0, & \mbox{otherwise}
            \end{cases}
\end{eqnarray}
where $\epsilon_i$ is given by (\ref{eqn:outage}).
The resulting state diagram is shown in Fig. \ref{fig:state_dia}.
The state transition probability matrix $\mathbf{A} = [\alpha_{ij}]_{i,j=0}^N$ takes the form
\begin{equation}\label{}
  \mathbf{A}=\left(
               \begin{array}{ccccc}
                 1-\epsilon_0 & \epsilon_0 & 0 &\dots & 0 \\
                 1-\epsilon_1  & 0& \epsilon_1 &\dots& 0 \\
                  \ddots&\ddots&\ddots&\ddots&\ddots\\
                  1-\epsilon_{N-1} & 0 & 0 &\dots& \epsilon_{N-1} \\
                 1-\epsilon_N & 0 & 0 &\dots& \epsilon_{N} \\
               \end{array}
             \right)
\end{equation}
For a time homogeneous Markov chain, the steady state probability for state $j$, $\pi_j$ is defined by
\begin{equation}
\pi_j=\sum_{i\in \mathcal{S}}\alpha_{ij}\pi_i
\end{equation}
where $\mathcal{S}$ defines the state space for the UE states.
%The time homogeneous Markov chain follows the property that
%\begin{equation}
%\sum_{i=0}^N \pi_i=1
%\end{equation}

Assuming $N_0=1$, for Rayleigh fading and state $i$, the outage probability is given by,
\begin{eqnarray}
  \epsilon_i&=& 1-\exp\Big(\frac{-(2^R-1)}{P_i}\Big)
\end{eqnarray}
After some algebraic manipulation, the required transmit power $P_i$ is calculated by,
\begin{equation}
P_i=\frac{1-2^R}{\log(1-\epsilon_i)}
\label{eqn:power}
\end{equation}
Note that other channel distributions, e.g., diversity reception or transmission with multiple receive and/or
transmit antennas with single-stream transmission and $2 \cdot d$ ($d$ is the number of active antennas) fold
diversity, result in similar outage probability expressions, as in equation (27) in \cite{Jorswieck2005d}:
\begin{eqnarray}\label{eq:outPdiv}
\epsilon_i = P \left( d \frac{2 ^R - 1}{P_i}, d \right)
\end{eqnarray}
with the incomplete Gamma function $P(a,x)$ defined in \cite[6.5.1]{Abramowitz}. It is not easy to solve
(\ref{eq:outPdiv}) with respect to $P_i$ due to the incomplete Gamma function.

From the transmit power for every state $i$, the average transmit power consumed is given by,
\begin{equation}\label{eqn:avg_power}
P_{a}=\sum_{i=0}^N P_i \pi_i.
\end{equation}

%%%%%%%%%%%%%%%%%%%%%%%%%%%%%%%%%%%%%%%%%%%%%%%%%%%%%%%%%%%%%%%%%%%%%%%%%
\section{Optimization Problem Formulation}
\label{sect:optimization}
The optimization problem is to compute a vector of power values $\mathbf{P}=[P_0,P_1,\dots P_N]$, which
satisfies the constraints on packet dropping parameters and minimizes average system energy. The problem is
mathematically formulated as,
\begin{eqnarray}
\label{eqn:optimization}
&&\min_{\mathbf{P}} P_a\\
s.t.
&&\begin{cases}
\mathcal{C}_1:\gamma_r\leq \gamma,& 0\leq\gamma\leq 1\\
\mathcal{C}_2:\epsilon_N\leq\epsilon_{out}& 0\leq\epsilon_{out}\leq 1
\end{cases}
\label{eqn:constrains}
\end{eqnarray}
$\mathcal{C}_1$ is the average packet loss constraint for the achieved average packet loss rate $\gamma_r$.
From the state space model,
\begin{equation}
\gamma_r = \sum_{i=0}^{N}\epsilon_{i}\pi_i~.
\label{eqn:drop_cons2}
\end{equation}
The outage probability $\epsilon_i$ and the corresponding transmit power $P_i$ for a UE in state $i$ is
computed such that the average packet dropping probability constraint $\mathcal{C}_1$ holds.
For $i=N$, $\epsilon_N\leq \epsilon_{out}$ where $\epsilon_{out}$ is defined in (\ref{eqn:bursty_loss}).
$\epsilon_i$ cannot be determined directly and needs to be optimized for the system parameters.
\begin{equation}\label{}
\epsilon_i=f(\gamma,N,\epsilon_{out},h_X(x),R)
\end{equation}
where $h_X(x)$ is the fading channel distribution.

The optimization problem is to find $\epsilon_i$, $\forall i$ that results in minimum average power. If we
choose $P_i$ too high for small states, the packets will more likely be transmitted too early at the expense of
larger power budget without exploiting loss tolerance of the application and provide good (but unnecessary)
QoE. On the other side, if $P_i$ is chosen too low in the beginning, the packets will be lost mostly and we
have to transmit with much higher power to meet the \emph{forced} condition that at least one packet has to be
transmitted to avoid the sequence of $N$ lost packets.

\subsection{Special Case $N=1$}
\label{sect:N=1}
Let us examine a special case with $N=1$. In this case, state transition probability matrix $\mat{A}$ reads,
\begin{equation}
  \mathbf{A}=\left(
               \begin{array}{cc}
                 1-\epsilon_0 & \epsilon_0\\
                 1-\epsilon_1 & \epsilon_1\\
               \end{array}
             \right)
\end{equation}
Steady state transition probabilities for states $0$ and $1$ are calculated as,
\begin{eqnarray}
\pi_0 = \frac{1-\epsilon_1}{1+\epsilon_0-\epsilon_1} \\
\pi_1 = \frac{\epsilon_0}{1+\epsilon_0-\epsilon_1}.
\end{eqnarray}
Computing $\gamma_r$ for $\epsilon_1=\epsilon_{out}$ and $\pi_0$ and $\pi_1$ calculated above,
(\ref{eqn:drop_cons2}) yields
\begin{equation}
\gamma_r=\frac{\epsilon_0}{1+\epsilon_0-\epsilon_{out}}.
\label{eqn:gamma}
\end{equation}
We can compute the value of $\epsilon_0$ in closed form that satisfies constraint $\mathcal{C}_1$ and
$\mathcal{C}_2$ with equality. Solving (\ref{eqn:gamma}) and $\mathcal{C}_1$ in (\ref{eqn:constrains}) with
equality,
\begin{equation}
\epsilon_0= (1-\epsilon_1)\frac{\gamma}{1-\gamma}.
\end{equation}
Then, we compute power levels $P_0$ and $P_1$ and resulting average power $P_{a}$ in closed form using
(\ref{eqn:power}) in (\ref{eqn:avg_power}). We numerically show in Section \ref{sect:results} that the power
levels computed in closed form for the boundary condition $\epsilon_N=\epsilon_{out}$ is not optimal for every
value of $\epsilon_{out}$.

The expressions for the power levels cannot be obtained in closed form for $N>1$. The variables
$\epsilon_0,\epsilon_1\dots \epsilon_{N}$ are unknown and it is not possible to compute a unique set of
$\epsilon_i,\forall i$ in closed form that satisfies $\mathcal{C}_1$ in (\ref{eqn:constrains}).
The optimization problem in (\ref{eqn:optimization}) is a combinatorial problem as it is hard to compute a
unique solution in terms of $\epsilon_i,\forall i$ due to sum of product term in (\ref{eqn:drop_cons2}). It is
therefore, difficult to compute $\mathbf{P}$ that minimizes $P_{a}$ using convex optimization techniques.

%%%%%%%%%%%%%%%%%%%%%%%%%%%%%%%%%%%%%%%%%%%%%%%%%%%%%%%%%%%%%
\begin{figure*}
\centering
  \subfigure[Average power for both closed form solution and SA method.]
 {\includegraphics[width=3.5in]{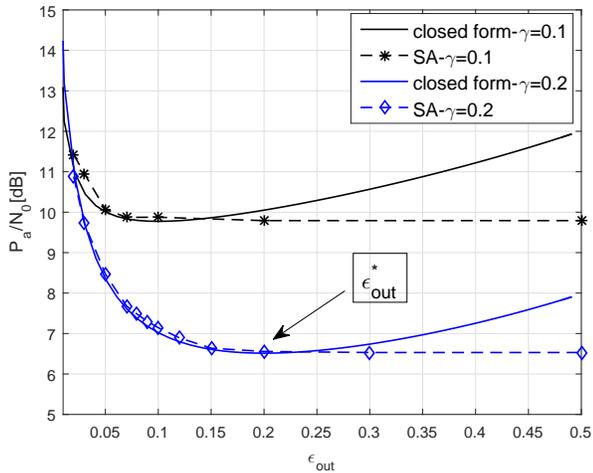}
 \label{fig:power}}
 \subfigure[Achieved $\gamma_r$ and $\epsilon_N$ for different $\epsilon_{out}$.]
{\includegraphics[width=3.5in]{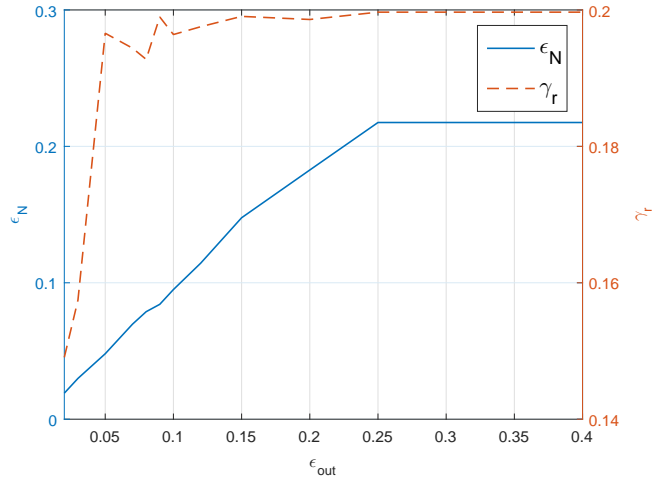}
  \label{fig:epsilon}}
 \caption{Performance of the proposed scheme for different packet loss parameters with $N=1$.}
 \vspace{-0.4cm}
 \label{fig:n1_case}
\end{figure*}
%%%%%%%%%%%%%%%%%%%%%%%%%%%%%%%%%%%%%%%%%%%%%%%%%%%%%%%%%%%%%

\subsection{Optimization with Peak Power Constraint}
Let us assume that we have a peak power constraint $P_m$ at the transmitter side. This implies that largest
transmit power at the UE cannot exceed $P_m$ in any state $i$, regardless of the other problem constraints.
Thus, peak power constraint is added to the constraints in (\ref{eqn:constrains}):
\begin{eqnarray}
%\begin{cases}
%\mathcal{C}_1:\gamma_r\leq \gamma,& 0\leq\gamma\leq 1\\
%\mathcal{C}_2:\epsilon_N\leq\epsilon_{out}& 0\leq\epsilon_{out}\leq 1\\
\mathcal{C}_3: P_i\leq P_m,&\forall i,j
%\end{cases}
\label{eqn:power_const}
\end{eqnarray}
where $\mathcal{C}_3$ represents the peak power constraint.
\begin{lemma}
The peak power constraint $P_i\leq P_m, \forall i,j$, reduces to $P_N\leq P_m$.
\label{lem:peak power}
\end{lemma}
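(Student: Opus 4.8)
The plan is to leverage the monotonicity result already established in Lemma~\ref{lem:power_level}, namely that $P_i \leq P_{i+1}$ for all $i \in [0,N]$. The statement $\mathcal{C}_3$ requires $P_i \leq P_m$ for every state $i$, but Lemma~\ref{lem:power_level} tells us the power levels are ordered, with $P_N$ being the largest among $P_0, P_1, \dots, P_N$. The key observation is that the single constraint $P_N \leq P_m$ therefore dominates all the others.

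First I would recall that Lemma~\ref{lem:power_level} gives the chain of inequalities $P_0 \leq P_1 \leq \dots \leq P_{N-1} \leq P_N$. By transitivity, this immediately yields $P_i \leq P_N$ for every $i \in [0,N]$. Second, I would argue the two directions of the reduction. If the peak power constraint $\mathcal{C}_3$ holds for all $i$, then in particular it holds for $i=N$, so $P_N \leq P_m$ follows trivially. Conversely, suppose only $P_N \leq P_m$ is imposed. Then for any state $i$, combining $P_i \leq P_N$ (from the ordering) with $P_N \leq P_m$ gives $P_i \leq P_m$ by transitivity, recovering the full constraint $\mathcal{C}_3$. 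Hence the system of $N+1$ inequalities $P_i \leq P_m$ is logically equivalent to the single inequality $P_N \leq P_m$, which is precisely the claimed reduction.

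I do not anticipate a genuine obstacle here, since the result is essentially a direct corollary of the established monotonicity of the power levels; the entire argument rests on transitivity of the ordering $\leq$. The only point requiring mild care is to confirm that Lemma~\ref{lem:power_level} applies to the optimal (or feasible) power vector under consideration and is not accidentally invalidated when the peak power constraint is active—that is, one should check that truncating the power levels at $P_m$ does not break the birth-death argument underlying Lemma~\ref{lem:power_level}. If the monotonic ordering is preserved under the feasibility region defined by $\mathcal{C}_1$ through $\mathcal{C}_3$, the reduction is immediate. I would therefore phrase the proof as a short deduction: state the ordering, note $P_N = \max_i P_i$, and conclude that $\max_i P_i \leq P_m \iff P_N \leq P_m$.
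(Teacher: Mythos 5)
Your proof is correct and follows essentially the same route as the paper: invoke the monotonicity of Lemma~\ref{lem:power_level} to identify $P_N$ as the largest power level, so the single constraint $P_N \leq P_m$ is equivalent to the full set $\mathcal{C}_3$. Your version is merely more explicit (spelling out both directions of the equivalence and transitivity), which the paper compresses into two sentences.
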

\begin{proof}
From  Lemma \ref{lem:power_level}, $P_i\leq P_{i+1}$, $\forall i$. This implies, $P_N$ is the largest transmit
power for any state. Constraining $P_N \leq P_m$ is therefore, enough to apply peak power constraint for the
overall system.
\end{proof}
From Lemma \ref{lem:peak power}, $P_N$ is constrained by $P_m$. However, $P_N$ is also constrained by the power
resulting from system parameter $\epsilon_{out}$ ($\mathcal{C}_2$). This implies that the problem is only
feasible if the solution satisfies both outage probability resulting from the peak power constraint and the
outage constraint $\epsilon_{out}$.
Denoting the power consumption from $\epsilon_{out}$ by $P_{out}$, the solution is feasible if
\begin{equation}\label{eqn:power_const}
 P_{out}\leq P_N\leq P_m.
\end{equation}

\section{Stochastic Optimization}
\label{sect:stochastic}
The combinatorial optimization problems which are not solvable with regular optimization techniques, can
approximately be solved using stochastic optimization methods. There are a few heuristic techniques in
literature to solve such problems like genetic algorithm, Q-learning, neural networks, etc.
We use Simulated Annealing (SA) algorithm to solve the problem. The algorithm was originally introduced in
statistical mechanics, and has been applied successfully to networking problems \cite{majid_TWC:13,
majid:sys2016}.

In SA algorithm, a random configuration in terms of transition probability matrix $\mat{A}$ is presented in
each iteration and the average power $P_a$ is evaluated only if constraints in (\ref{eqn:constrains}) are met.
If the evaluated $P_a$ is less than the previously computed best solution, the candidate set of outage
probabilities $\epsilon_i$, $\forall i$ are selected as the best available solution. However, the candidate set
$\epsilon_i$, $\forall i$ can be treated as the best solution with a certain temperature dependent probability
even if the new solution is worse than the best known solution. This step is called \emph{muting} and helps the
system to avoid local minima. The muting occurs frequently at the start of the process as the selected
temperature is very high and decrease as temperature is decreased gradually, where temperature denotes a
numerical value that controls the muting process.

In literature, different cooling temperature schedules have been employed according to the problem
requirements. The cooling schedule determines the convergence rate of the solution. If temperature cools down
at a fast rate, the optimal solution can be missed. On the other hand, if it cools down too slowly,
optimization requires large amount of time. In this work, we employ the following cooling schedule, called fast
annealing (FA) \cite{FA}.
In FA, it is sufficient to decrease the temperature linearly in each step $b$ such that,
\begin{equation}
\label{eqn:BA} T_b = \frac{T_0}{c_{\rm sa}\cdot b+1}
\end{equation}
where $T_0$ is a suitable starting temperature and $c_{\rm sa}$ is a constant, which depends on the
requirements of the problem. After a fixed number of temperature iterations, when muting ceases to occur
completely, the best solution is accepted as optimal solution.

%The pseudocode for the optimization of programming problem using SA is presented in Algorithm
\ref{algorithm}.
%
%\begin{algorithm}
%\caption{Optimization by SA Algorithm}
%\KwIn{$(\mat{A}, T_m,\gamma,\epsilon_{out},P_m$)\;}
%$T_m$ = lower bound on temperature\;
%$P_{a,0}$= Compute $P_a$ as a function of initial $\mat{A}$\;
%$P_a^* =P_{a,0}$; $\mat{A^*} = \mat{A}$\;
%\tcc{Temperature iterations.}
%$T_b=T_0$\;
%\While{$T_b \geq T_m$}{
%$T_b = \frac{T_0}{c_{\rm sa}\cdot b+1}$\;
%\tcc{$n$ iterations of $\mat{A}$ at $T_b$.}
%\For{i=0 \KwTo n}{
% Generate a random $\mat{\hat{A}}$ and compute $P_N$\;
% \tcc{Evaluating $\mathcal{C}_2$ and $\mathcal{C}_3$.}
% \If {$(P_N\leq P_m)\AND (P_N\geq P_{out})$}{
% Solution feasible\;
% Compute $\gamma_r$ for $\mat{\hat{A}}$\;
%  \tcc{Evaluating $\mathcal{C}_1$.}
%  \If{$\gamma_r<\gamma$}{
%  Compute power vector $\hat{\mat{P}}$ as a function of $\mat{\hat{A}}$ using (\ref{eqn:power})\;
%  Compute average power $\hat{P}_a$ in (\ref{eqn:avg_power})\;
%  $s$ = A random number in range $[0,1]$\;
%  \tcc{Muting and update implementations with $\bar{P}_a$ muted power.}
%  \If {$s<\exp \big(\frac{-(\hat{P_a}- \bar{P}_a)}{T}\big)$}{
%        $\bar{P_{a}}=\hat{P}_a$\;
%     \If {($\hat{P_a}\leq P_a^*$)}{
%        $P_a^* =\hat{P_a}$;
%        }
%     }
%   }
%   }
%   \Else{
%   Solution not feasible\;
%  }
% }
%}
%
%
%\KwOut{$(P_a^*,\mat{A^*}$);}
%\label{algorithm}
%\end{algorithm}

\section{Numerical Results}
\label{sect:results}
We perform numerical evaluation of the proposed scheduling scheme in this section. We consider a Rayleigh
fading channel with mean $1$ for the point to point link. The noise variance $N_0$ equals one. Spectral
efficiency $R$ equals 1 bits/s/Hz while peak power is set to a relatively high value of $20$ dBW for all
numerical examples.

%\begin{figure}
%\center
%\includegraphics[width=3.5in]{pics/n1_case.eps}
%\caption{Average power as a function of packet loss parameters with $N=1$.}
%\label{fig:n1_case}
%\end{figure}
%
%\begin{figure}[t!]
%\center
%\includegraphics[width=3.5in]{pics/gamma_eps.eps}
%\caption{Average power as a function of packet loss parameters for $N>1$ case.}
%\vspace{-0.3cm}
%\label{fig:general_n}
%\end{figure}

We study the effect of packet loss parameters on average power consumption for the special case $N=1$ in Fig.
\ref{fig:n1_case}, where the results are evaluated using both closed form expressions derived in Section
\ref{sect:N=1} and the SA framework developed in Section \ref{sect:stochastic}. Average transmit power is
plotted for the fixed $N$ and $\gamma=0.1,0.2$ in Fig. \ref{fig:power}. Note that $\epsilon_{out}=\epsilon_N$
in the closed form expression. Average power consumption is a convex function in $\epsilon_{out}$ for a fixed
$\gamma$ and $N$, and a unique optimal $\epsilon_{out}$ can be seen. Let us call it $\epsilon_{out}^*$. If
system parameter $\epsilon_{out}\leq\epsilon_{out}^*$, it results in high average power. However, if
$\epsilon_{out}>\epsilon_{out}^*$, the system has more flexibility and it is optimal to set
$\epsilon_N=\epsilon_{out}^*$ instead to save power. The optimized results with SA method match closely with
the closed form results for $\epsilon_{out}\leq\epsilon_{out}^*$ which validate the accuracy of solution
provided by SA algorithm. For $\epsilon_{out}>\epsilon_{out}^*$, SA method provides the optimal solution in
contrast to the suboptimal solution where $\epsilon_N = \epsilon_{out}$ is enforced.

Fig. \ref{fig:epsilon} confirms the results in Fig. \ref{fig:power} for the experiment conducted using SA
algorithm. We plot $\epsilon_N$ (read on left y-axis) and the corresponding $\gamma_r$ values (read on right
y-axis) for different values of $\epsilon_{out}$. When $\epsilon_{out}\leq\epsilon_{out}^*$, $\epsilon_N$
follows $\epsilon_{out}$ closely while $\gamma_r<\gamma$.\footnote{The curve for $\gamma_r$ shows some
irregular behaviour. Note that $\gamma_r$ is constrained to be less than $\gamma$ and irregular values of
$\gamma_r$ resulting from stochastic optimization still meet this condition.} When
$\epsilon_{out}>\epsilon_{out}^*$, $\epsilon_N=\epsilon_{out}^*$ while $\gamma_r\to \gamma$. These results
explain the average power optimization for SA algorithm in Fig. \ref{fig:power} that all degrees of freedom are
sufficiently exploited at $\epsilon_{out}=\epsilon_{out}^*$ to optimize the energy consumption for a fixed
$\gamma$ and $N$.

Fig. \ref{fig:general_n} compares the average power consumption for the case $N=1,2,3$ and $\gamma=0.2$. The
power levels are optimized using Simulated Annealing algorithm. It is evident that $\epsilon_{out}^*$ and
resulting average power is the same for all $N$.\footnote{Minor difference in the values is due to nature of
randomized SA algorithm.} When $\epsilon_{out}\leq\epsilon_{out}^*$, an increase in $N$ for a fixed $\gamma$
helps to reduce average power consumption in general (specially at small $\epsilon_{out}$). More flexibility in
packet dropping parameters provides more degrees of freedom and results in energy savings. When
$\epsilon_{out}>\epsilon_{out}^*$, the effect of large $N$ vanishes and power saving depends solely on average
packet dropping parameter.

\begin{figure}
\center
%\vspace{-0.3cm}
\includegraphics[width=3.5in]{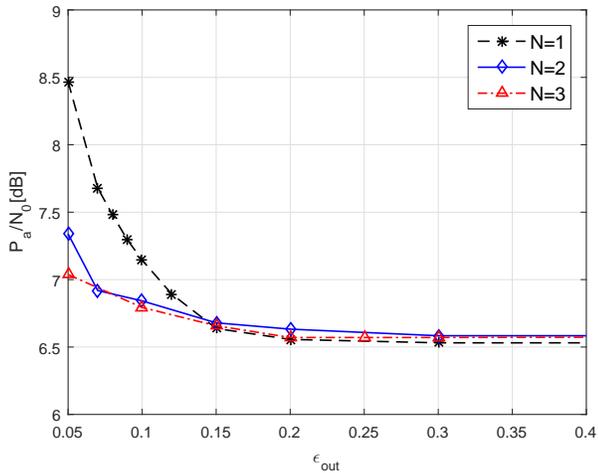}
%\vspace{-0.4cm}
\caption{Average power as a function of packet loss parameters for different $N$. $\gamma$ is fixed to 0.2.}
\vspace{-0.4cm}
\label{fig:general_n}
\end{figure}

\section{Conclusion}
\label{sect:conclusions}
We consider energy efficient scheduling and power allocation for the loss tolerant applications. Data loss is
characterized as a function of average and successive packet loss, and the probability that successive packet
loss is not guaranteed. These parameters jointly define the QoE and context for an application. In contrast to
average packet loss parameter, other loss parameters depend on the packet loss patterns without actually
changing the number of lost packets. By considering bursty packet loss a form of contextual information, we
provide another degree of freedom in the scheduling algorithm which can be exploited to reduce energy
consumption. Without CSIT, we formulate the average power optimization problem as a function of data loss
parameters. The optimization problem is a combinatorial optimization problem and requires stochastic
optimization technique to solve it. We compute closed form expressions of average power as a function of system
parameters for the special case $N=1$ and compare it with the solution obtained from simulated annealing
algorithm. Both of the results match up to a point and diverge after words due to inaccurate assumptions for
the closed form solution. However, the matching of both results validate the solution provided by simulated
annealing algorithm. For $N\geq1$, we numerically quantify the energy savings for increased flexibility in
successive packet loss tolerance parameter.
\section*{Acknowledgement}
This publication has emanated from research conducted with the financial support of Science Foundation Ireland
(SFI) and is co-funded under the European Regional Development Fund under Grant Number 13/RC/2077.

\bibliographystyle{IEEEtran}
\bibliography{bibliography}
\end{document}